\theoremstyle{plain}
\newtheorem{theorem}{Theorem}[section]
\newtheorem{proposition}[theorem]{Proposition}
\newtheorem{lemma}[theorem]{Lemma}
\newtheorem{assumption}[theorem]{Assumption}
\theoremstyle{remark}
\newtheorem{definition}[theorem]{\rm Definition}
\newtheorem{example}[theorem]{\rm Example}
\newtheorem{remark}[theorem]{\rm Remark}
\newcommand{\Z}{\mathbb{Z}}
\newcommand{\R}{\mathbb{R}}
\newcommand{\C}{\mathbb{C}}
\newcommand{\T}{\mathbb{T}}
\newcommand{\U}{\mathbf{U}}
\newcommand{\I}{\mathcal{I}}
\newcommand{\Corner}{\mathrm{Corner}}
\newcommand{\BE}{\mathrm{Bulk-Edge}}
\newcommand{\HH}{\mathcal{H}}
\newcommand{\K}{\mathcal{K}}
\newcommand{\esssp}{\mathrm{ess\mathchar`-}\mathrm{sp}}
\newcommand{\AIII}{\mathrm{A\hspace{-.1em}I\hspace{-.1em}I\hspace{-.1em}I}}
\DeclareMathOperator{\End}{\mathrm{End}}
\DeclareMathOperator{\Ker}{\mathrm{Ker}}
\begin{document}
\title[Topological invariants and corner states]
{Topological invariants and corner states for Hamiltonians on a three-dimensional lattice}
\author[S. Hayashi]{Shin Hayashi}
\address{Mathematics for Advanced Materials -- Open Innovation Laboratory, AIST,  
c/o Advanced Institute for Materials Research, Tohoku University,  
2--1--1 Katahira, Aoba, Sendai 980--8577, Japan.}
\address{Graduate School of Mathematical Sciences, University of Tokyo, 3-8-1 Komaba, Tokyo, 153-8914, Japan.}
\email{{\tt shin-hayashi@aist.go.jp}}
\keywords{bulk-edge and corner correspondence, quarter-plane Toeplitz operators, $K$-theory for $C^*$-algebras}
\subjclass[2010]{Primary 19K56; Secondary 47B35, 81V99.}

\maketitle
\begin{abstract}
Periodic Hamiltonians on a three-dimensional (3-D) lattice with a spectral gap not only on the bulk but also on two edges at the common Fermi level are considered.
By using $K$-theory applied for the quarter-plane Toeplitz extension, two topological invariants are defined.
One is defined for the gapped bulk and edge Hamiltonians, and the non-triviality of the other means that the corner Hamiltonian is gapless.
A correspondence between these two invariants is proved.
Such gapped Hamiltonians can be constructed from Hamiltonians of 2-D type A and 1-D type AIII topological insulators, and its corner topological invariant is the product of topological invariants of these two phases.
\end{abstract}
\setcounter{tocdepth}{2}
\tableofcontents
\section{Introduction}
\label{sec:1}
In condensed matter physics, a correspondence between topological invariants defined for a gapped Hamiltonian of an infinite system without edge and that for a Hamiltonian of a system with edge is well known.
This correspondence is called the bulk-edge correspondence.
In the theoretical study of the quantum Hall effect, Thouless et al. \cite{TKNN82} introduced the integer valued topological invariant for a gapped Hamiltonian of an infinite system without edge. This invariant is called the TKNN number, and is equal to the first Chern number of the Bloch bundle \cite{Ko85}. 
Hatsugai considered such phenomena on a system with edge and defined a topological invariant as a winding number counted on a Riemann surface, in other words, the spectral flow of a family of self-adjoint Fredholm Toeplitz operators \cite{Hat93a}.
Hatsugai went on to prove the equality between these two topological invariants (bulk-edge correspondence) \cite{Hat93b}.

Non-commutative geometry was shown to be very useful in this context through the work of Bellissard \cite{Be86,BvES94}.
Kellendonk, Richter and Schulz-Baldes proved the bulk-edge correspondence by using the six-term exact sequence of $K$-Theory for $C^*$-algebras associated to the Toeplitz extension of the rotation algebra \cite{HKR00,HKR02}.

Analysis of Toeplitz operators have been further developed in additional studies (see \cite{Do98,BS06}, for example).
Here, we focus on the theory of quarter-plane Toeplitz algebras.
Such algebras were first studied by Douglas and Howe \cite{DH71,CDSS71,CDS72}.
Among the many results from their analyses, the following short exact sequence is most important in this paper.
\begin{equation}\label{seq1}
0 \to \K \to \mathcal{T}^{\alpha, \beta} \to \mathcal{S}^{\alpha, \beta} \to 0,
\end{equation}
where $\mathcal{T}^{\alpha, \beta}$ is the quarter-plane Toeplitz algebra (the symbols are defined below in Sect.~\ref{sec:2.3}).
This sequence was obtained first by Douglas and Howe for a special case, while Park proved the general case. He studied quarter-plane Toeplitz algebras by using $K$-theory for $C^*$-algebras \cite{Pa90,PS91}.

In this paper, we mainly consider a system with a codimension-two corner, which appears at the intersection of two codimension-one edges.
We consider a periodic Hamiltonian on the three-dimensional (3-D) lattice $\Z^3$, which has a spectral gap not only on the bulk (which means that our Hamiltonian is gapped) but also on two edges (which means that compressions of our Hamiltonian onto two subsemigroups on the lattice that correspond to two edges, are gapped) at the common Fermi level (Assumption~\ref{assumption}).
For such systems, we define two topological invariants. One is defined by two gapped edge Hamiltonians, which are the compressions of the common bulk (Definition~\ref{bulkedgeinv}), while the other is defined by corner Hamiltonians and is called the {\em gapless corner invariant} (Definition~\ref{cornerinv}). Both invariants are defined as elements of some $K$-groups. As a numerical invariant, the corner topological invariant is essentially defined by counting corner states.
We next show the correspondence of these two invariants (Theorem~\ref{main}), using the six-term exact sequence associated to (\ref{seq1}).
Note that for such bulk-edge gapped systems, some weak topological invariants are zero (Proposition~\ref{weak}).
In this sense, invariants considered in this paper can be seen as secondary invariants.
Although we mainly consider 3-D systems, our method can also be applied to systems of other dimensions (Remark~\ref{rem2}).

Such bulk-edge gapped Hamiltonians can be constructed from Hamiltonians of 2-D type A and 1-D type AIII topological insulators as follows.
Let $H_1$ be a Hamiltonian of a 2-D type A topological insulator and $H_2$ be that of a 1-D type AIII topological insulator whose chiral symmetry is given by $\Pi$.
Then, the bulk-edge gapped Hamiltonian is given by the following formula:
\begin{equation*}
	H = H_1 \bm{\otimes} \Pi + 1 \bm{\otimes} H_2.
\end{equation*}
Moreover, the spectral flow of the corner topological invariant is the product of the topological invariants of these two Hamiltonians (Theorem~\ref{product}).
Spectral gaps of two edge Hamiltonians constructed in this way remains open, unless those of the original two Hamiltonians closes. In this sense, our secondary topological phase of bulk-edge gapped Hamiltonians can be obtained as a product of 2-D type A and 1-D type AIII topological phases.
Using this construction, we obtain a non-trivial example (Example~\ref{example}).

Note that, except for the use of the sequence (\ref{seq1}), our invariants are defined and the correspondence is proved as in the case of the bulk-edge correspondence.
Although our study of such invariants is modeled on the Kellendonk--Richter--Schulz-Baldes proof of the bulk-edge correspondence \cite{HKR00,HKR02}, we note that there are many other proofs of bulk-edge correspondence from the perspective of $K$-theory and index theory \cite{EG02,GP13,ASV13,BCR1,Ku15,MT16,Hayashi}.

The bulk-edge correspondence for systems preserving some symmetries (odd time-reversal symmetry for quantum spin Hall systems, for example) is also well-known \cite{ASV13,GP13,Ku15,MT16,BKR16}.
In this sense, it is natural to expect such a ``bulk-edge and corner'' correspondence for systems preserving some symmetries.
Another direction will be higher codimensional cases, and the corresponding generalization of (\ref{seq1}) has already been discussed in \cite{DH71}.
These possible generalizations are not treated in this paper.

Studies of such bulk-edge gapped Hamiltonians and topologically protected corner states can also be found in the physical literature \cite{BBH17,HWK17}.
Benalcazar et al. \cite{BBH17} discussed a model that preserves some spatial symmetries and proposed experimental realizations.
Hashimoto et al. \cite{BBH17} considered different boundary conditions on two edges, and discussed such corner (edge-of-edge) states.
In this paper, we do not consider such spatial symmetry and our boundary condition is simply a Dirichlet boundary condition.

This paper is organized as follows.
In Sect.~\ref{sec:2}, we present some basic facts about $K$-theory for $C^*$-algebras, spectral flow and quarter-plane Toeplitz algebras.
Sect.~\ref{sec:3} fixes the conditions for the ``gapped'' Hamiltonians which are considered in this paper. Two topological invariants for such Hamiltonians are defined, and the relation between the two is proved.
In Sect.~\ref{sec:4}, we construct the gapped Hamiltonians from Hamiltonians of 2-D type A and 1-D type AIII topological insulators. The spectral flow of the corner topological invariant of this bulk-edge gapped Hamiltonian is shown to be equal to the product of topological invariants of these two Hamiltonians. A non-trivial example is given using this construction.

\section{Preliminaries}
\label{sec:2}
In this section, we present some necessary basic facts and calculations.
Throughout this paper, all algebras and Hilbert spaces are considered over the complex field $\C$, and all operators are complex linear.

\subsection{$K$-theory for $C^*$-algebras}
\label{sec:2.1}
We use $K$-theory for $C^*$-algebras in order to define some topological invariants and also to prove our main theorem.
This subsection gathers basic facts from $K$-theory for $C^*$-algebras without proof.
We refer the reader to \cite{Mu90,We93,Bl98,HR00,RLL00} for the details.

Let $A$ be a unital $C^*$-algebra, that is a Banach $*$-algebra with a multiplicative unit, which satisfies $||a^*a|| = ||a||^2$ for any $a$ in $A$.
An element $p$ in $A$ is called a projection if $p = p^* = p^2$, and an element $u$ in $A$ is said to be unitary if $u^*u = uu^* = 1$.
For a positive integer $n$, let $M_n(A)$ be the matrix algebra of all $n \times n$ matrices with entries in $A$.
As in the case of $M_n(\C)$, the matrix algebra $M_n(A)$ has a natural $*$-algebra structure. We know that $M_n(A)$ has a (unique) norm making it a $C^*$-algebra.
Let $P_n(A)$ be the set of all projections in $M_n(A)$, and let $P_\infty(A) := \bigsqcup_{n=1}^{\infty}P_n(A)$.
For $p \in P_n(A)$ and $q \in P_m(A)$, we write $p \sim_0 q$ if and only if there exists some $v \in M_{m,n}(A)$ such that $p = v^*v$ and $q = vv^*$.
The relation $\sim_0$ induces an equivalence relation on $P_\infty(A)$.
We define a binary operation $\oplus$ on $P_\infty(A)$ by $p \oplus q := \mathrm{diag}(p,q)$.
Then, $\oplus$ induces an addition $+$ on equivalence classes $D(A) := P_\infty(A)/\sim_0$, and $(D(A), +)$ is a commutative monoid.
The $K_0$-group $K_0(A)$ for the unital $C^*$-algebra $A$ is defined to be the group completion (Grothendieck group) of the commutative monoid $(D(A), +)$.
We denote the class of $p \in P_\infty(A)$ in $K_0(A)$ by $[p]_0$.
For a non-unital $C^*$-algebra $I$, we define its $K_0$-group $K_0(I)$ to be the kernel of the map $K_0(\tilde{I}) \to K_0(\C)$, where $\tilde{I} = I \oplus \C$ is the unitization of $I$, and the map is induced by the projection onto the second component.
Let $\mathcal{U}(A)$ be the group of unitary elements in $A$, and let $\mathcal{U}_n(A) := \mathcal{U}(M_n(A))$.
We consider the set $\mathcal{U}_\infty(A) := \bigsqcup_{n=1}^{\infty}\mathcal{U}_n(A)$ and a binary operation $\oplus$ on $\mathcal{U}_\infty(A)$, defined as above.
For $u \in \mathcal{U}_n(A)$ and $v \in \mathcal{U}_m(A)$, we write $u \sim_1 v$ if and only if there exists some $k \geq \max \{m,n \}$ such that $u \oplus 1_{k-n}$ and $v \oplus 1_{k-m}$ are homotopic in $\mathcal{U}_k(A)$.
The relation $\sim_1$ is an equivalence relation on $\mathcal{U}_\infty(A)$, and $\oplus$ induces an addition $+$ on equivalence classes $\mathcal{U}_\infty(A)/ \sim_1$.
Then, $(\mathcal{U}_\infty(A)/ \sim_1, +)$ is an abelian group. We denote this group by $K_1(A)$ and the class of $u \in \mathcal{U}_\infty(A)$ in $K_1(A)$ by $[u]_1$.
For a non-unital $C^*$-algebra $I$, its $K_1$-group is defined using its unitization, $K_1(I) := K_1(\tilde{I})$.
Note that by using the polar decomposition, an invertible element in $M_n(A)$ defines an element in $K_1(A)$.
$*$-homomorphisms $\varphi, \psi \colon A \to B$ between $C^*$-algebras are said to be homotopic if there exists a path of $*$-homomorphisms $\varphi_t \colon A \to B$ for $t \in [0,1]$ such that the map $[0,1] \to B$ defined by $t \mapsto \varphi_t(a)$ is continuous for each $a \in A$, $\varphi_0 = \varphi$ and $\varphi_1 = \psi$.
In this sense, $K_0$ and $K_1$ are the additive covariant homotopy functors from the category of $C^*$-algebras to the category of abelian groups.
Let $\mathcal{V}$ be a separable Hilbert space and $K(\mathcal{V})$ be the set of compact operators on $\mathcal{V}$.
For a $C^*$-algebra $A$, we have the stability property, that is, $K_i(K(\mathcal{V}) \bm{\otimes} A) \cong K_i(A)$, where $i=0,1$ and $\bm{\otimes}$ is the $C^*$-algebraic tensor product.

Let $A$ be a $C^*$-algebra. The suspension of $A$ is the $C^*$-algebra
$SA = \{ f \in C([0,1], A) \ | \ f(0) = f(1) = 0 \} \cong A \bm{\otimes} C_0((0,1))$, where $C([0,1], A)$ is the $C^*$-algebra of continuous functions from $[0,1]$ to $A$, and $C_0((0,1))$ is the $C^*$-algebra of complex valued continuous functions that vanish at infinity.
Then there exists an isomorphism $\theta_A \colon K_1(A) \to K_0(SA)$.
We also have a map $\beta_A \colon K_0(A) \to K_1(SA)$, called the Bott map.
If $A$ is a unital $C^*$-algebra, $\beta_A$ is given by $\beta_A[p]_0 = [f_p]_1$, where $f_p(t) = \exp(2\pi i t p)$.
By the Bott periodicity theorem, $\beta_A$ is an isomorphism.
For the short exact sequence of $C^*$-algebras, $0 \to I \to A \to B \to 0$, we can associate the following six-term exact sequence:
\[\xymatrix{
K_1(I) \ar[r]& K_1(A) \ar[r] & K_1(B) \ar[d]^{\delta_1}
\\
K_0(B) \ar[u]^{\delta_0} & K_0(A) \ar[l] & K_0(I). \ar[l]
}\]
The maps $\delta_0$ and $\delta_1$ are connecting homomorphisms for the sequence.
If $A$ and $B$ are unital $C^*$-algebras, and $I$ is a closed ideal in $A$, then $\delta_0$ is expressed in the following way.
For $p \in P_n(B)$, we can take its self-adjoint lift $\hat{p} \in M_n(A)$, and then we have $\delta_0[p]_0 = [\exp(-2\pi i \hat{p})]_1$.
The following diagram is commutative:
\begin{equation}\label{diag2}
\vcenter{
\xymatrix{
K_1(B) \ar[d]_{\delta_1} \ar[r]^{\theta_B} & K_0(SB) \ar[d]^{\delta_0} \\
K_0(I) \ar[r]^{\beta_I} & K_1(SI).}}
\end{equation}

\subsection{Spectral Flow and Winding Number}
\label{sec:2.2}
In this subsection, we discuss the relationship between the spectral flow and the winding number \cite{AS69,Ph96}.

Let $\T$ be a unit circle in the complex plane.
Let $\R_{>0}$ (resp. $\R_{<0}$) be the set of strictly positive (resp. negative) real numbers.
For a separable Hilbert space $\mathcal{V}$, we consider the space of bounded linear operators $B(\mathcal{V})$ on $\mathcal{V}$ equipped with the norm topology.
Let $\mathcal{F}^{s.a.}_\ast$ be the subspace of $B(\mathcal{V})$ consisting of all self-adjoint Fredholm operators whose essential spectra\footnote{In this paper, we denote the spectrum of an operator $T$ by $\mathrm{sp}(T)$, and the essential spectrum of $T$ by $\esssp(T)$.
For a bounded linear self-adjoint operator $T$, the essential spectrum of $T$ consists of accumulation points of $\mathrm{sp}(T)$ and isolated points of $\mathrm{sp}(T)$ with infinite multiplicity (see Proposition $2.2.2$ of \cite{HR00}, for example).} are not contained in either $\R_{>0}$ or $\R_{<0}$.
We consider the following subspace of $\mathcal{F}^{s.a.}_\ast$,
\begin{equation*}
	\hat{F}^\infty_\ast := \{ B \in \mathcal{F}^{s.a.}_\ast \ | \ ||B||=1, \ \mathrm{sp}(B) \ \text{is finite and} \ \esssp(B)= \{ \pm 1\} \}.
\end{equation*}
Let $i \colon \hat{F}^\infty_\ast \to \mathcal{F}^{s.a.}_\ast$ be the inclusion. Then, $i$ is a homotopy equivalence.
Let $\U(\infty)$ be the inductive limit of a sequence $\U(1) \to \U(2) \to \cdots$, where $\U(n)$ is the unitary group of degree $n$, and the map $\U(n) \to \U(n+1)$ is given by $\mathbf{A} \mapsto \mathrm{diag}(\mathbf{A}, \bm{1})$.
We have a map $j \colon \hat{F}^\infty_\ast \to \U(\infty)$ given by $j(B) = \exp(i \pi (B+1))$. The map $j$ is also a homotopy equivalence.\footnote{The subspace $\hat{F}^{\infty}_\ast$ and the commutativity of the diagram (\ref{sf}) are discussed explicitly in \cite{Ph96}, while the homotopy equivalence between $\hat{F}^{\infty}_\ast$ and $\U(\infty)$ was essentially proved in \cite{AS69}. Here, we follow the notations used in \cite{Ph96}.}
Thus, we have $K_1(C(\T)) = [\T, \U(\infty)] \cong [\T, \hat{F}^{\infty}_\ast] \cong [\T, \mathcal{F}^{s.a.}_\ast]$.
For a continuous loop in $\mathcal{F}^{s.a.}_\ast$, the spectral flow is defined. This is, roughly speaking, the net number of crossing points of eigenvalues with zero counted with multiplicity.
Then, the following diagram is commutative,
\begin{equation}\label{sf}
	\vcenter{
		\xymatrix{
[\T, \mathcal{F}^{s.a.}_\ast] \ar[r]^{\mathrm{sf}} & \Z
\\
[\T, \hat{F}^{\infty}_\ast] \ar[r]^{j_*} \ar[u]^{i_\ast} & [\T, \U(\infty)], \ar[u]}} \vspace{-1mm}
\end{equation}
where the map $\mathrm{sf} \colon [\T, \mathcal{F}^{s.a.}_\ast] \to \Z$ is given by the spectral flow, and the map $[\T, \U(\infty)] \to \Z$ is given by taking the winding number of the determinant (see Prop. $6$ of \cite{Ph96}).
All arrows indicate group isomorphisms.

\subsection{Quarter-Plane Toeplitz $C^*$-algebras}
\label{sec:2.3}
In this subsection, we present basic facts about quarter-plane Toeplitz algebras that will be used below \cite{DH71,Pa90}.

Let $\HH$ be the Hilbert space $l^2(\Z^2)$.
For a pair of integers $(m, n)$, let ${\bm e_{m,n}}$ be the element of $\HH$ that is $1$ at $(m,n)$ and $0$ elsewhere.
For such $(m, n)$, let $M_{m,n} \colon \HH \to \HH$ be the translation operator defined by $(M_{m,n}\varphi)(k,l) = \varphi(k-m, l-n)$.
We choose real numbers $\alpha < \beta$, and let $\HH^{\alpha}$ and  $\HH^{\beta}$ be closed subspaces of $\HH$ spanned by $\{ {\bm e_{m,n}} \ | \ -\alpha m + n \geq 0 \}$ and $\{ {\bm e_{m,n}} \ | \ -\beta m + n \leq 0 \}$, respectively.
Let $\HH^{\alpha, \beta}$ be their intersection $\HH^{\alpha} \cap \HH^{\beta}$(see Figure $1$.).
\begin{figure}
\centering
    \includegraphics[width=84mm,clip]{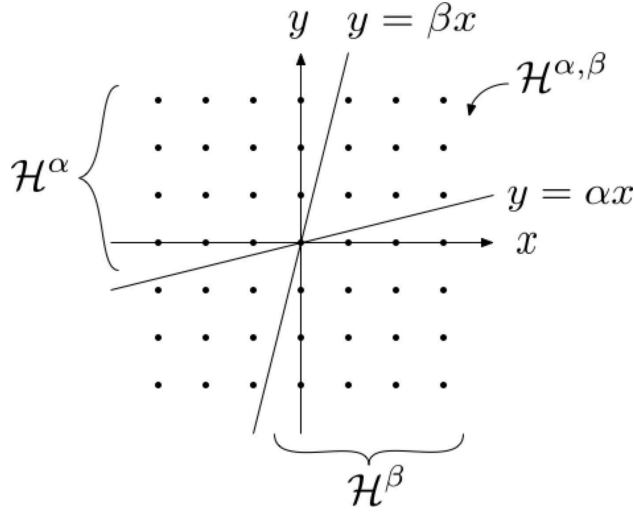}
\caption{Half planes $\HH^{\alpha}$ and $\HH^{\beta}$, and the quarter-plane $\HH^{\alpha, \beta}$}
\end{figure}
We can take $\alpha = - \infty$ or $\beta = + \infty$, but not both.
Let $P^\alpha$ and $P^\beta$ be orthogonal projections of $\HH$ onto $\HH^\alpha$ and $\HH^\beta$, respectively.
Then, $P^\alpha P^\beta$ is the orthogonal projection of $\HH$ onto $\HH^{\alpha, \beta}$.
Note that $P^\alpha$ and $P^\beta$ commute.
We define the {\em quarter-plane Toeplitz $C^*$-algebra} to be the $C^*$-subalgebra $\mathcal{T}^{\alpha, \beta}$ of $B(\HH^{\alpha, \beta})$ generated by $\{ P^\alpha P^\beta M_{m,n}P^\alpha P^\beta \ | \ (m,n) \in \Z^2 \}$.
We also define the {\em half-plane Toeplitz $C^*$-algebras} $\mathcal{T}^\alpha$ and $\mathcal{T}^\beta$ to be $C^*$-subalgebras of $B(\HH^{\alpha})$ and $B(\HH^{\beta})$ generated by $\{ P^\alpha M_{m,n} P^\alpha \ | \ (m,n) \in \Z^2 \}$ and $\{ P^\beta M_{m,n} P^\beta \ | \ (m,n) \in \Z^2 \}$, respectively\footnote{Let $\mathcal{V}$ be a Hilbert space, $\mathcal{W} \subset \mathcal{V}$ be its closed subspace, $P_{\mathcal{W}}$ be an orthogonal projection of $\mathcal{V}$ onto $\mathcal{W}$ and $T \colon \mathcal{V} \rightarrow \mathcal{V}$ be a bounded linear operator. We write $P_{\mathcal{W}} T P_{\mathcal{W}}$ for the compression of $T$ onto $\mathcal{W}$, that is, the operator on $\mathcal{W}$ of the form $\varphi \mapsto P_{\mathcal{W}} T \varphi$ ($\varphi \in \mathcal{W}$).}.
We have surjective $*$-homomorphisms $\gamma^\alpha \colon \mathcal{T}^{\alpha, \beta} \to \mathcal{T}^\alpha$ and $\gamma^\beta \colon \mathcal{T}^{\alpha, \beta} \to \mathcal{T}^\beta$, which map $P^\alpha P^\beta M_{m,n}P^\alpha P^\beta$ to $P^\alpha M_{m,n} P^\alpha$ and $P^\beta M_{m,n}P^{\beta}$, respectively.
We also have surjective $*$-homomorphisms $\sigma^\alpha \colon \mathcal{T}^\alpha \to C(\T^2)$ and $\sigma^\beta \colon \mathcal{T}^\beta \to C(\T^2)$ which map $P^\alpha M_{m,n} P^\alpha$ to $\chi_{m,n}$ and $P^\beta M_{m,n} P^\beta$ to $\chi_{m,n}$, respectively, where $\chi_{m,n}(z_1, z_2) = z_1^m z_2^n$.
The well-definedness of $\gamma^\alpha$ and $\gamma^\beta$ is proved in \cite{Pa90}, and that of $\sigma^\alpha$ and $\sigma^\beta$ is proved in \cite{CD71}.
Let $\rho^\alpha \colon \mathcal{T}^\alpha \to \mathcal{T}^{\alpha,\beta}$, $\rho^\beta \colon \mathcal{T}^\beta \to \mathcal{T}^{\alpha,\beta}$ and $\xi^{\alpha, \beta} \colon C(\T^2) \to \mathcal{T}^{\alpha,\beta}$ be bounded linear maps given by compression; that is, $\rho^\alpha(X) = P^\beta X P^\beta$, $\rho^\beta(Y) = P^\alpha Y P^\alpha$ and $\xi^{\alpha,\beta}(Z) = P^\alpha P^\beta Z P^\alpha P^\beta$, respectively.
We define a $C^*$-algebra $\mathcal{S}^{\alpha, \beta}$ to be the pullback of $\mathcal{T}^\alpha$ and $\mathcal{T}^\beta$ along $C(\T^2)$; that is,
$\mathcal{S}^{\alpha, \beta} := \{ (T^\alpha, T^\beta) \in \mathcal{T}^\alpha \oplus \mathcal{T}^\beta \ | \ \sigma^\alpha(T^\alpha) = \sigma^\beta(T^\beta) \}$.
We write $p^\alpha \colon \mathcal{S}^{\alpha,\beta} \rightarrow \mathcal{T}^\alpha$ and $p^\beta \colon \mathcal{S}^{\alpha,\beta} \rightarrow \mathcal{T}^\beta$ for $*$-homomorphisms given by projections onto each component.
There is a $*$-homomorphism $\gamma \colon \mathcal{T}^{\alpha, \beta} \to \mathcal{S}^{\alpha, \beta}$ given by $\gamma(T) = (\gamma^\alpha(T), \gamma^\beta(T))$.
Let $\K$ be the $C^*$-algebra of compact operators on $\mathcal{H}^{\alpha,\beta}$.
\begin{theorem}[Douglas--Howe\cite{DH71}, Park\cite{Pa90}]\label{exact}
There is the following short exact sequence,
\begin{equation*}
0 \to \K \to \mathcal{T}^{\alpha, \beta} \overset{\gamma}{\to} \mathcal{S}^{\alpha, \beta} \to 0,
\end{equation*}
which has a linear splitting\footnote{Note that the map $\rho$ is just a linear map and not a $*$-homomorphism. This sequence splits as a short exact sequence of linear spaces, but does not split as that of $C^*$-algebras.}
 $\rho \colon \mathcal{S}^{\alpha, \beta} \to \mathcal{T}^{\alpha, \beta}$ given by
$\rho(T^\alpha, T^\beta) = \rho^\alpha(T^\alpha) +\rho^\beta(T^\beta) - \xi^{\alpha,\beta}\sigma^\beta(T^\beta)$.
\end{theorem}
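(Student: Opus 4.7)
The theorem decomposes into four claims: (i) $\gamma$ takes values in $\mathcal{S}^{\alpha,\beta}$; (ii) $\rho$ is a bounded linear splitting satisfying $\gamma\rho = \mathrm{id}_{\mathcal{S}^{\alpha,\beta}}$; (iii) $\K \subset \mathcal{T}^{\alpha,\beta}$ and $\K \subset \ker\gamma$; and (iv) $\ker\gamma \subset \K$. The well-definedness of $\gamma^\alpha, \gamma^\beta$ as $*$-homomorphisms and of $\rho^\alpha, \rho^\beta, \xi^{\alpha,\beta}$ as bounded linear maps with the stated targets are taken from \cite{Pa90, CD71}.

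Parts (i)--(iii) are verified at the level of generators. For (i), on the generator $T = P^\alpha P^\beta M_{m,n} P^\alpha P^\beta$ both $\sigma^\alpha\gamma^\alpha(T)$ and $\sigma^\beta\gamma^\beta(T)$ equal $\chi_{m,n}$; since all four maps are $*$-homomorphisms the identity $\sigma^\alpha\gamma^\alpha = \sigma^\beta\gamma^\beta$ extends by density and continuity, placing $\gamma(T) \in \mathcal{S}^{\alpha,\beta}$. For (ii), the matched generator pair $(P^\alpha M_{m,n} P^\alpha, P^\beta M_{m,n} P^\beta)$ lies in $\mathcal{S}^{\alpha,\beta}$, and the commutativity $P^\alpha P^\beta = P^\beta P^\alpha$ collapses the three summands of $\rho$ to a single copy of the generator $P^\alpha P^\beta M_{m,n} P^\alpha P^\beta$; applying $\gamma$ recovers the pair. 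The $C^*$-subalgebra of $\mathcal{S}^{\alpha,\beta}$ generated by matched generator pairs is dense, so $\gamma\rho = \mathrm{id}$ extends by boundedness. For (iii), rank-one projections on $\HH^{\alpha,\beta}$ are produced from polynomial combinations of the generators by standard Toeplitz-algebra ``slab subtraction'' constructions applied parallel to each edge, and these compact operators are annihilated by $\gamma^\alpha$ and $\gamma^\beta$.

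The main obstacle is (iv). The key identity will be $T - \rho\gamma(T) \in \K$ for every $T \in \mathcal{T}^{\alpha,\beta}$; combined with $\gamma\rho = \mathrm{id}$, this gives $\ker\gamma \subset \K$ immediately, since $T \in \ker\gamma$ forces $T = T - \rho\gamma(T) \in \K$. To verify the identity on a two-generator product $T = P^\alpha P^\beta M_{m_1,n_1} P^\alpha P^\beta M_{m_2,n_2} P^\alpha P^\beta$, direct expansion using the algebraic identity $1 - P^\alpha - P^\beta + P^\alpha P^\beta = (1-P^\alpha)(1-P^\beta)$ yields
\begin{equation*}
T - \rho\gamma(T) = P^\alpha P^\beta M_{m_1,n_1} (1-P^\alpha)(1-P^\beta) M_{m_2,n_2} P^\alpha P^\beta.
\end{equation*}
The middle factor $(1-P^\alpha)(1-P^\beta) M_{m_2,n_2}$ restricted to $\HH^{\alpha,\beta}$ is of finite rank: it vanishes on $e_{a,b}$ unless the translate $(a+m_2, b+n_2)$ lies in the quarter-plane opposite $\HH^{\alpha,\beta}$, and this forces $(a,b)$ into a bounded region near the corner. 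For longer generator products the analogous expansion produces a finite sum of corner-supported terms, each compact by the same argument; extending by linearity and norm continuity gives $T - \rho\gamma(T) \in \K$ for all $T \in \mathcal{T}^{\alpha,\beta}$, completing the proof.
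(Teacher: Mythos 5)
The paper does not prove this statement: it is quoted from Douglas--Howe and Park, so your attempt has to stand on its own. Your skeleton (verify on generators, prove the key identity $T-\rho\gamma(T)\in\K$, deduce $\Ker\gamma\subset\K$) is reasonable, and your length-two computation of $T-\rho\gamma(T)$ is correct. But step (ii) contains a genuine logical gap, in two layers. First, $\rho$ is only a linear map, not a $*$-homomorphism, so $\gamma\rho$ is not multiplicative; verifying $\gamma\rho=\mathrm{id}$ on matched generator pairs and then ``extending by boundedness'' over the $C^*$-subalgebra they generate is invalid --- a bounded linear map is determined by its values on a dense \emph{linear} subspace, and the closed linear span of the pairs $(P^\alpha M_{m,n}P^\alpha,P^\beta M_{m,n}P^\beta)$ is a small proper subspace of $\mathcal{S}^{\alpha,\beta}$ (it does not even contain products of two such pairs in general). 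Second, the density assertion itself --- that the matched pairs generate $\mathcal{S}^{\alpha,\beta}$ --- is precisely the surjectivity of $\gamma$, which is one of the substantive claims of the theorem: it is easy for $\alpha=0,\beta=\infty$ via the tensor-product description $\mathcal{T}\bm{\otimes}\mathcal{T}$ of Douglas--Howe, but for irrational angles it is a nontrivial part of Park's analysis (one must show, e.g., that pairs $(X,0)$ with $X\in\Ker\sigma^\alpha$ are reached). You assert it without argument, so exactness at $\mathcal{S}^{\alpha,\beta}$ and the splitting identity are not established. Relatedly, taking from \cite{Pa90} that the compressions $\rho^\alpha,\rho^\beta$ land in $\mathcal{T}^{\alpha,\beta}$ is borderline circular: for a product of two generators, $\rho^\alpha(P^\alpha M_1P^\alpha M_2P^\alpha)=P^\alpha P^\beta M_1P^\alpha M_2P^\alpha P^\beta$ differs from an element of $\mathcal{T}^{\alpha,\beta}$ by an edge-supported, non-compact term, and this containment is part of what the theorem asserts.

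Two further points need repair. In (iii), for irrational $\alpha$ or $\beta$ the operator $1-\hat{M}\hat{M}^*$ is an infinite-rank projection onto a strip along an edge, so ``slab subtraction'' parallel to one edge does not produce rank-one projections; you can get a nonzero finite-rank projection by multiplying two such strip projections, one per edge, but to conclude $\K\subset\mathcal{T}^{\alpha,\beta}$ you then need an irreducibility argument (an irreducible $C^*$-subalgebra of $B(\HH^{\alpha,\beta})$ containing one nonzero compact contains all of $\K$) or an explicit construction of matrix units; this is missing. In (iv), the claim that for words of length $\geq 3$ ``the analogous expansion produces a finite sum of corner-supported terms'' is not automatic: a naive expansion produces terms such as $P^\alpha P^\beta M_1 P^\alpha M_2 P^\alpha(1-P^\beta)M_3 P^\alpha P^\beta$, which are supported on an unbounded strip along one edge and are not compact. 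The statement is true, but it needs the inclusion--exclusion bookkeeping: writing $W_Q$ for the word with $Q$ in every intermediate slot, one has $T-\rho\gamma(T)=W_{P^\alpha P^\beta}-W_{P^\alpha}-W_{P^\beta}+W_1$, and expanding each slot over the four projections $P^\alpha P^\beta$, $P^\alpha(1-P^\beta)$, $(1-P^\alpha)P^\beta$, $(1-P^\alpha)(1-P^\beta)$ one checks that every monomial lacking either a $(1-P^\alpha)$ or a $(1-P^\beta)$ factor has total coefficient zero, while every surviving monomial contains both; for those, bounded translations force the initial lattice site to lie within bounded distance of both edge lines, hence in a finite neighbourhood of the corner, so each surviving term is finite rank. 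With that bookkeeping, your extension by linearity and norm continuity, and the deduction $\Ker\gamma\subset\K$ (which only uses $\rho(0)=0$, not the splitting), are fine; but as written, (ii) and (iii) are the places where the real content of Douglas--Howe and Park is being assumed rather than proved.
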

The following theorem follows immediately by Atkinson's theorem.\footnote{Such results were first obtained by Douglas and Howe \cite{DH71} for the special case $\mathcal{T}^{0,\infty}$ by expressing the algebra $\mathcal{T}^{0,\infty}$ as a tensor product of two Toeplitz algebras. Park obtained the Fredholm condition for the general $\mathcal{T}^{\alpha,\beta}$ in a different way \cite{Pa90}.}
\begin{theorem}[Douglas-Howe\cite{DH71}, Park\cite{Pa90}]\label{Fredholm}
An operator $T$ in $\mathcal{T}^{\alpha, \beta}$ is a Fredholm operator if and only if $\gamma^\alpha(T)$ and $\gamma^\beta(T)$ are both invertible elements in $\mathcal{T}^\alpha$ and $\mathcal{T}^\beta$, respectively.
\end{theorem}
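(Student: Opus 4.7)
The plan is to reduce the statement to the short exact sequence of Theorem~\ref{exact} via Atkinson's theorem, and then to analyze invertibility inside the pullback $\mathcal{S}^{\alpha,\beta}$.

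First, by Atkinson's theorem, $T \in \mathcal{T}^{\alpha,\beta} \subseteq B(\HH^{\alpha,\beta})$ is Fredholm if and only if its class $T + \K$ is invertible in the Calkin algebra $B(\HH^{\alpha,\beta})/\K$. Since $\mathcal{T}^{\alpha,\beta}/\K$ is a unital $C^*$-subalgebra of the Calkin algebra, spectral permanence for $C^*$-subalgebras yields that this is equivalent to $T + \K$ being invertible in $\mathcal{T}^{\alpha,\beta}/\K$. By Theorem~\ref{exact}, the map $\gamma$ descends to an isomorphism $\mathcal{T}^{\alpha,\beta}/\K \cong \mathcal{S}^{\alpha,\beta}$, so $T$ is Fredholm if and only if $\gamma(T) = (\gamma^\alpha(T), \gamma^\beta(T))$ is invertible in $\mathcal{S}^{\alpha,\beta}$.

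The remaining task is to show that an element $(A, B) \in \mathcal{S}^{\alpha,\beta}$ is invertible there if and only if $A$ is invertible in $\mathcal{T}^\alpha$ and $B$ is invertible in $\mathcal{T}^\beta$. One direction is immediate from componentwise multiplication, since the unit of $\mathcal{S}^{\alpha,\beta}$ is $(1,1)$. For the converse, given inverses $S^\alpha \in \mathcal{T}^\alpha$ of $A$ and $S^\beta \in \mathcal{T}^\beta$ of $B$, I would verify that the pair $(S^\alpha, S^\beta)$ lies in the pullback, i.e., $\sigma^\alpha(S^\alpha) = \sigma^\beta(S^\beta)$. This follows by applying the symbol maps to the identities $A S^\alpha = 1$ and $B S^\beta = 1$ and invoking the pullback compatibility $\sigma^\alpha(A) = \sigma^\beta(B)$ built into membership in $\mathcal{S}^{\alpha,\beta}$.

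The proof is therefore largely mechanical once Theorem~\ref{exact} is granted; there is no genuine obstacle beyond remembering to invoke spectral permanence when passing between the Calkin algebra and $\mathcal{T}^{\alpha,\beta}/\K$. The only conceptual point is to notice that the pullback structure of $\mathcal{S}^{\alpha,\beta}$ automatically forces the inverses produced on the two half-plane algebras to agree under the common symbol on $C(\T^2)$, so no extra analytic matching argument is needed.
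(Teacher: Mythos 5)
Your proposal is correct and follows the same route the paper intends: the paper proves Theorem~\ref{Fredholm} simply by invoking Atkinson's theorem together with the exact sequence of Theorem~\ref{exact}, which is exactly your argument, with the spectral-permanence step and the componentwise invertibility in the pullback $\mathcal{S}^{\alpha,\beta}$ spelled out. Those details (including that $\sigma^\alpha(A^{-1})=\sigma^\alpha(A)^{-1}=\sigma^\beta(B)^{-1}=\sigma^\beta(B^{-1})$, so the pair of inverses lies in the pullback) are verified correctly, so there is nothing to fix.
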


By taking a tensor product of the above sequence and $C(\T)$, we have the short exact sequence,\footnote{Since $C(\T)$ is an abelian $C^*$-algebra, $C(\T)$ is a nuclear $C^*$-algebra by Takesaki's theorem. Since $C(\T)$ is nuclear, this sequence is exact (see \cite{Mu90}, for example).}
$0 \to \K \bm{\otimes} C(\T) \to \mathcal{T}^{\alpha, \beta} \bm{\otimes} C(\T) \to \mathcal{S}^{\alpha, \beta} \bm{\otimes}C(\T) \to 0$.
Associated to this sequence, we have the following six-term exact sequence:
\[\xymatrix{
K_1(\mathcal{K} \bm{\otimes} C(\T)) \ar[r]& K_1(\mathcal{T}^{\alpha,\beta} \bm{\otimes} C(\T)) \ar[r] & K_1(\mathcal{S}^{\alpha,\beta} \bm{\otimes} C(\T)) \ar[d]^{\delta_1}
\\
K_0(\mathcal{S}^{\alpha,\beta} \bm{\otimes} C(\T)) \ar[u]^{\delta_0} & K_0(\mathcal{T}^{\alpha,\beta} \bm{\otimes} C(\T)) \ar[l] & K_0(\mathcal{K}\bm{\otimes} C(\T)), \ar[l]
}\]
where $\delta_0$ and $\delta_1$ are connecting homomorphisms associated to this short exact sequence.
By the stability property, $K_1(\mathcal{K} \bm{\otimes} C(\T))$ is isomorphic to $K_1(C(\T))$.
\begin{lemma}\label{surj}
	The map $\delta_0 \colon K_0(\mathcal{S}^{\alpha,\beta} \bm{\otimes} C(\T)) \to K_1(C(\T))$ is surjective.
\end{lemma}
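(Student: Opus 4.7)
The plan is to reduce the surjectivity of $\delta_0$ to that of the connecting map $\delta_1$ of the unsuspended quarter-plane sequence~\eqref{seq1}, and then to exhibit a unitary over $\mathcal{S}^{\alpha,\beta}$ whose quarter-plane Toeplitz lift has Fredholm index $\pm 1$.

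For the reduction, I would consider the natural morphism of short exact sequences from the suspended sequence $0\to S\K\to S\mathcal{T}^{\alpha,\beta}\to S\mathcal{S}^{\alpha,\beta}\to 0$ into the $C(\T)$-tensored sequence on which the lemma's $\delta_0$ is defined, induced by the split embedding $SA = A\bm\otimes C_0((0,1)) \hookrightarrow A\bm\otimes C(\T)$. Naturality of the six-term exact sequence then yields a commutative square whose bottom row is the map $K_1(S\K) \to K_1(\K\bm\otimes C(\T))$; since $K_1(\K)=0$, the complementary summand vanishes and this bottom map is an isomorphism. Hence surjectivity of the lemma's $\delta_0$ is implied by surjectivity of the analogous connecting map for the suspended sequence. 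Applying the commutative square \eqref{diag2} with $B=\mathcal{S}^{\alpha,\beta}$ and $I=\K$ identifies the latter, via the isomorphisms $\theta_{\mathcal{S}^{\alpha,\beta}}$ and $\beta_\K$, with the connecting map $\delta_1\colon K_1(\mathcal{S}^{\alpha,\beta})\to K_0(\K)\cong\Z$ of \eqref{seq1}. So it suffices to show that this $\delta_1$ is surjective.

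By Theorem~\ref{Fredholm} together with the standard form of the index map, for any unitary $U\in M_n(\mathcal{S}^{\alpha,\beta})$ the Toeplitz lift $\rho(U)\in M_n(\mathcal{T}^{\alpha,\beta})$ is Fredholm and $\delta_1[U]_1=\mathrm{ind}(\rho(U))\in K_0(\K)\cong\Z$. To produce such a unitary with index $\pm 1$, I would take a matrix projection $q\in M_n(C(\T^2))$ representing the Bott generator of $K_0(C(\T^2))\cong\Z^2$. Choosing self-adjoint Toeplitz lifts $\hat q^\alpha\in M_n(\mathcal{T}^\alpha)$ and $\hat q^\beta\in M_n(\mathcal{T}^\beta)$ of $q$, the identity $\exp(2\pi i q)=1$ guarantees that $U:=(\exp(2\pi i\hat q^\alpha),\exp(2\pi i\hat q^\beta))$ has common symbol $1$ and hence defines a unitary in $M_n(\mathcal{S}^{\alpha,\beta})$.

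The main obstacle is the computation of $\mathrm{ind}(\rho(U))$ itself: this amounts to a ``corner-type'' index theorem relating $\mathrm{ind}(\rho(U))$ to the two half-plane bulk-edge invariants (the Chern numbers of $q$, both $\pm 1$), and is essentially the main content of the paper. A shortcut avoiding the general computation is to handle the symmetric case $\alpha=0$, $\beta=+\infty$ first, in which $\mathcal{T}^{0,+\infty}\cong\mathcal{T}\bm\otimes\mathcal{T}$ and the Fredholm index of an elementary tensor factorises as $\mathrm{ind}(T_f\bm\otimes T_g)=\mathrm{ind}(T_f)\cdot\mathrm{ind}(T_g)$; here a concrete unitary in $M_n(\mathcal{S}^{0,+\infty})$ realising index $\pm 1$ can be written down directly from matrix Toeplitz operators on each factor, and the general case then follows by a functorial or homotopy argument.
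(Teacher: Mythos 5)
Your opening reduction is sound and is in fact the paper's own argument: splitting off a base point of $\T$ identifies $K_0(\mathcal{S}^{\alpha,\beta}\bm{\otimes} C(\T))$ with $K_0(\mathcal{S}^{\alpha,\beta})\oplus K_1(\mathcal{S}^{\alpha,\beta})$ and $K_1(\K\bm{\otimes} C(\T))$ with $0\oplus K_0(\K)$, and diagram (\ref{diag2}) then identifies the relevant component of $\delta_0$ with the index map $\delta_1\colon K_1(\mathcal{S}^{\alpha,\beta})\to K_0(\K)$ of the sequence (\ref{seq1}). At that point the paper simply cites Park for $K_1(\mathcal{S}^{\alpha,\beta})\cong\Z$ and Park--Jiang for the fact that this $\delta_1$ is an isomorphism. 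You instead try to prove surjectivity of $\delta_1$ by hand, and that is where your argument has genuine gaps.

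Two concrete problems. First, the index formula you invoke in the model case $\alpha=0$, $\beta=+\infty$ is false as stated: if $T_f$ and $T_g$ are Fredholm but not invertible, the elementary tensor $T_f\bm{\otimes} T_g$ is not Fredholm at all, since its kernel contains $\Ker T_f\bm{\otimes}\ell^2(\N)$ (equivalently, its half-plane symbols $T_f\bm{\otimes} g$ and $f\bm{\otimes} T_g$ fail to be invertible, cf.\ Theorem~\ref{Fredholm}); so $\mathrm{ind}(T_f\bm{\otimes} T_g)=\mathrm{ind}(T_f)\cdot\mathrm{ind}(T_g)$ cannot produce an index-$\pm1$ element. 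What does work is the graded (sharp) product $\bigl(\begin{smallmatrix} T_f\otimes 1 & -1\otimes T_g^{*} \\ 1\otimes T_g & T_f^{*}\otimes 1\end{smallmatrix}\bigr)\in M_2(\mathcal{T}^{0,\infty})$, whose two half-plane symbols are invertible and whose index is the product of the two indices; that would repair the special case (this is essentially Douglas--Howe). Second, and more seriously, the sentence ``the general case then follows by a functorial or homotopy argument'' carries no weight: there is no morphism or homotopy relating $\mathcal{T}^{\alpha,\beta}$ for different angles to which you can appeal --- indeed $K_0(\mathcal{S}^{\alpha,\beta})$ itself changes with the rationality of $\alpha$ and $\beta$ (Remark~\ref{rem1}) --- and showing that $\delta_1$ hits a generator of $K_0(\K)$ for arbitrary, possibly irrational, $\alpha,\beta$ is precisely the content of the theorems of Park and Jiang that the paper quotes. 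Your unitary $U=(\exp(2\pi i\hat q^{\alpha}),\exp(2\pi i\hat q^{\beta}))$ does lie in $M_n(\mathcal{S}^{\alpha,\beta})$ and $\delta_1[U]_1=\pm\,\mathrm{ind}(\rho(U))$ is legitimate, but you never compute this index beyond the (incorrectly treated) special case, so surjectivity is not established by the proposal as written; either supply that computation (a corner-type index theorem) or cite Park and Jiang as the paper does.
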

\begin{proof}
Let us take a base point of $\T$. Then we have the isomorphisms
$K_0(\mathcal{S}^{\alpha,\beta} \bm{\otimes} C(\T))
	\cong
		K_0(\mathcal{S}^{\alpha,\beta}) \oplus K_1(\mathcal{S}^{\alpha,\beta})$
and
$K_1(\mathcal{K} \bm{\otimes} C(\T))
	\cong
		K_1(\mathcal{K}) \oplus K_0(\mathcal{K})
			\cong
				0 \oplus \Z$.
Consider the following commutative diagram,\footnote{Horizontal arrows are the composite of (inverse) maps induced by homomorphisms of $C^*$-algebras and suspension isomorphisms (see Example $7.5.1$ of \cite{Mu90}), and the commutativity of this diagram follows from the naturality of the connecting homomorphism $\delta_0$ and diagram (\ref{diag2}).}
\[\xymatrix{
K_0(\mathcal{S}^{\alpha,\beta} \bm{\otimes} C(\T)) \ar[r]^{\cong \hspace{3mm}} \ar[d]^{\delta_0} &
	 	 K_0(\mathcal{S}^{\alpha,\beta}) \oplus K_1(\mathcal{S}^{\alpha,\beta}) \ar[d]^{\delta_0 \oplus \delta_1}
\\
K_1(\mathcal{K} \bm{\otimes} C(\T)) \ar[r]^{\cong} & K_1(\mathcal{K}) \oplus K_0(\mathcal{K}).
}\]
The group $K_1(\mathcal{S}^{\alpha,\beta})$ is isomorphic to $\Z$ \cite{Pa90},
and the map $\delta_1 \colon K_1(\mathcal{S}^{\alpha,\beta}) \to K_0(\mathcal{K})$ is an isomorphism \cite{Pa90,Ji95}.
Thus the left map is surjective.
\end{proof}
\begin{remark}\label{rem1}
Note that the group $K_0(\mathcal{S}^{\alpha, \beta})$ is calculated as follows \cite{Pa90}.
$$
K_0(\mathcal{S}^{\alpha, \beta}) \cong
\left\{
\begin{aligned}
\Z^{ \ } & \hspace{3mm} \text{if $\alpha$ and $\beta$ are both rational,}\\
\Z^2 & \hspace{3mm} \text{if either $\alpha$ or $\beta$ is rational and the other is irrational,}\\
\Z^3 & \hspace{3mm} \text{if $\alpha$ and $\beta$ are both irrational.}
\end{aligned}
\right.
$$
The case of $\alpha = - \infty$ (or $\beta = \infty$) is the same as that of rational $\alpha$ (or rational $\beta$).
From this calculation, we can see that $K_0(\mathcal{S}^{\alpha, \beta})$ depends delicately on $\alpha$ and $\beta$, which correspond to the angles of edges.
By the proof of Lemma \ref{surj}, this component maps to zero by $\delta_0$.

Generators of the group $K_0(\mathcal{S}^{\alpha, \beta})$ are given as follows.
In any cases, one direct summand $\Z$ is generated by the identity.
For the second case, let us consider the case of rational $\alpha = p/q$ where $p$, $q$ are relatively prime integers, and $\beta$ is irrational. We take relatively prime integers $r$ and $s$ such that $p/q \neq r/s$. Let $\hat{\chi} = P^\beta \chi_{r,s} P^\beta$, then, the other $\Z$ direct summand is generated by the class $[(1 - \hat{\chi}^*\hat{\chi}, 0)]_0 - [(1-\hat{\chi}\hat{\chi}^*,0)]_0$.
The third case ($\alpha$ and $\beta$ are both irrational) has two $\Z$ direct summands generated by such elements.
\end{remark}

\section{Bulk-Edge and Corner Correspondence}
\label{sec:3}
In this section, we consider some ``gapped'' Hamiltonians and define two topological invariants for them.
The correspondence between them is also proved.

\subsection{Topological invariants for bulk-edge gapped Hamiltonians}
\label{sec:3.1}
Let $V$ be a finite dimensional Hermitian vector space, and denote the complex dimension of $V$ by $N$.
Let $\HH_V := \HH \bm{\otimes} V$, $\HH_V^\alpha := \HH^\alpha \bm{\otimes} V$, $\HH_V^\beta := \HH^\beta \bm{\otimes} V$, $\HH_V^{\alpha,\beta} := \HH^{\alpha,\beta}\bm{\otimes} V$, $P^\alpha_V := P^\alpha \bm{\otimes} 1$ and $P^\beta_V := P^\beta \bm{\otimes} 1$.
We consider a continuous map $\T^3 \rightarrow \End_\C(V)$, $(\xi, \eta, t) \mapsto H(\xi, \eta, t)$, where, for each $(\xi,\eta,t) \in \T^3$, $H(\xi,\eta,t)$ is Hermitian.
The multiplication operator\footnote{Let $f \colon \T^3 \to \End_\C(V)$ be a continuous map. Then the operator on $L^2(\T^3; V)$ defined by $g \mapsto fg$ is called the multiplication operator generated by $f$.} generated by $H(\xi, \eta, t)$ defines an operator on $L^2(\T^3; V)$.
Through the Fourier transform, we obtain a bounded linear self-adjoint operator $H$ on $l^2(\Z^3; V)$.
We call $H$ the {\em bulk Hamiltonian}.
By the Fourier transform in the last $\Z$ component, we obtain a continuous family of bounded linear self-adjoint operators $\{ H(t) \colon \HH_V \rightarrow \HH_V \}_{t \in \T}$.
Using this, we consider the following half-plane Toeplitz operators:
\begin{equation*}
	H^\alpha(t) := P^\alpha_V H(t) P^\alpha_V \colon \HH_V^\alpha \to \HH_V^\alpha,
\hspace{2mm}
	H^\beta(t) := P^\beta_V H(t) P^\beta_V \colon \HH_V^\beta \to \HH_V^\beta.
\end{equation*}
They are bounded self-adjoint operators.
We call $H^\alpha(t)$ and $H^\beta(t)$ {\em edge Hamiltonians}.
Let us take an orthonormal frame of $V$.
Then, since $H^\alpha(t)$ and $H^\beta(t)$ are compressions of the same operator $H(t)$, the pair $(H^\alpha(t), H^\beta(t))$ defines a self-adjoint element of the $C^*$-algebra $M_N(\mathcal{S}^{\alpha, \beta} \bm{\otimes} C(\T))$.
Let $\mu$ be a real number.
The following is our assumption.
\begin{assumption}[Spectral Gap Condition]\label{assumption}
We assume that our edge Hamiltonians have a common spectral gap at the Fermi level $\mu$ for any $t$ in $\T$, i.e.,
$\mu$ is not contained in either $\mathrm{sp}(H^\alpha(t))$ or $\mathrm{sp}(H^\beta(t))$.
We call this condition as a {\em spectral gap condition}.
\end{assumption}
\begin{remark}\label{sgc}
Such $\mu$ does exist. Actually, we can make $\mu$ sufficiently large or small.
However, if we choose such $\mu$, our topological invariants will be zero (see also Remark~\ref{general}).
Non-trivial invariants appear, at least if $\mu$ lies in a common spectral gap of $H^\alpha(t)$ and $H^\beta(t)$.
Note that in this case, our bulk Hamiltonian $H(t)$ also has a spectral gap at $\mu$ since $\mathrm{sp}(H(t))$ is contained in $\mathrm{sp}(H^\alpha(t), H^\beta(t))$.\footnote{There is a $*$-homomorphism $M_N(\mathcal{S}^{\alpha,\beta}) \rightarrow M_N(C(\T^2))$ which maps $(H^\alpha(t), H^\beta(t))$ to $H(t)$.}
\end{remark}

In what follows, we assume $\mu = 0$ without loss of generality.
Following Remark \ref{sgc}, we further assume that $\mathrm{sp}(H^\alpha(t), H^\beta(t))$ is not contained in either $\R_{>0}$ or $\R_{<0}$.\footnote{This is a technical assumption, and turns out to be not very important (see Remark \ref{general}).}
Note that under our spectral gap condition, the element $(H^\alpha(t), H^\beta(t))$ is invertible in $M_N(\mathcal{S}^{\alpha, \beta} \bm{\otimes} C(\T))$.

We next define topological invariants for our bulk-edge gapped Hamiltonians.
Let $h \colon \R \setminus \{0\} \rightarrow \R$ be a continuous function which is $1$ on $\R_{< 0}$ and $0$ on $\R_{> 0}$.
\begin{definition}\label{bulkedgeinv}
By the continuous functional calculus (see \cite{Mu90}, for example), we have a projection $p := h(H^\alpha(t), H^\beta(t)) = (h(H^\alpha(t)), h(H^\beta(t)))$ in $M_N(\mathcal{S}^{\alpha,\beta} \bm{\otimes} C(\T))$.
We denote the element $[p]_0$ in the $K$-group $K_0(\mathcal{S}^{\alpha,\beta} \bm{\otimes} C(\T))$ by $\I_\BE(H)$.\footnote{In order to define the element $[p]_0$ of the $K$-group $K_0(\mathcal{S}^{\alpha,\beta} \bm{\otimes} C(\T))$, we took the orthonormal frame of $V$. However, this element does not depend on the choice.}
\end{definition}

We here compare $\I_\BE(H)$ and well-known bulk invariants.
Our bulk Hamiltonian $H$ defines an element $[h(H(\xi,\eta,t))]_0$ of the $K$-group $K_0(C(\T^3)) \cong \Z \oplus \Z \oplus \Z \oplus \Z$.\footnote{From the perspective of topological $K$-theory, the group $K_0(C(\T^3))$ is isomorphic to the topological $K$-cohomology group $K^0(\T^3)$, and the element $[h(H(\xi,\eta,t))]_0$ corresponds to the class of the Bloch bundle.}
One direct summand $\Z$ is generated by the identity, and topological invariants for the bulk Hamiltonian corresponding to the other three components are called {\em weak invariants}.
\begin{proposition}\label{weak}
For our gapped Hamiltonians, these weak invariants are zero.
\end{proposition}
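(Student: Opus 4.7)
The plan is to exploit that the spectral gap assumption lifts $h(H)$ to projections in both half-plane Toeplitz algebras, and then to extract from this the vanishing of the weak invariants of $[h(H)]_0 \in K_0(C(\T^3)) \cong \Z \oplus \Z^3$. Explicitly, functional calculus yields projections $h(H^\alpha) \in M_N(\mathcal{T}^\alpha \bm{\otimes} C(\T))$ and $h(H^\beta) \in M_N(\mathcal{T}^\beta \bm{\otimes} C(\T))$, both mapping under their respective symbol maps to $h(H) \in M_N(C(\T^3))$. Hence $[h(H)]_0$ lies in both $\mathrm{Image}\bigl((\sigma^\alpha \bm{\otimes} \mathrm{id})_*\bigr)$ and $\mathrm{Image}\bigl((\sigma^\beta \bm{\otimes} \mathrm{id})_*\bigr)$. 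By the six-term exact sequence associated with the tensored half-plane Toeplitz extension
\[
0 \to \ker(\sigma^\alpha) \bm{\otimes} C(\T) \to \mathcal{T}^\alpha \bm{\otimes} C(\T) \to C(\T^3) \to 0,
\]
the first image equals $\ker(\delta^\alpha)$ for the connecting map $\delta^\alpha \colon K_0(C(\T^3)) \to K_1(\ker(\sigma^\alpha) \bm{\otimes} C(\T))$, and symmetrically for $\beta$. So $[h(H)]_0 \in \ker(\delta^\alpha) \cap \ker(\delta^\beta)$, and it suffices to show this intersection equals $\Z \cdot [1]_0$.

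To analyze the kernels, I would compute $\delta^\alpha$ explicitly using the structure of $\mathcal{T}^\alpha$ as a half-plane Toeplitz algebra of slope $\alpha$. For rational $\alpha$, a $\mathrm{GL}_2(\Z)$ change of basis aligned with the $\alpha$-edge direction identifies $\mathcal{T}^\alpha$ with the classical 1D Toeplitz algebra tensored by $C(\T)$ along the edge; after tensoring by $C(\T_t)$, naturality of the boundary map under tensor products identifies $\delta^\alpha$ with the 1D Toeplitz boundary in the direction transverse to the $\alpha$-edge, tensored with the identity on $C(\T^2)$ for the edge-along and $t$-directions. In the resulting K\"unneth basis of $K_0(C(\T^3)) \cong \Z^4$, the kernel $\ker(\delta^\alpha)$ is rank two, generated by $[1]_0$ and the Chern class of the 2-torus spanned by $\xi_\parallel^\alpha$ (the momentum along the $\alpha$-edge) and $t$. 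When $\alpha \neq \beta$, the 2-torus classes $(\xi_\parallel^\alpha, t)$ and $(\xi_\parallel^\beta, t)$ are distinct in $H_2(\T^3; \Z)$, and a short linear-algebra computation gives $\ker(\delta^\alpha) \cap \ker(\delta^\beta) = \Z \cdot [1]_0$.

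The main anticipated obstacle is the irrational case, in which $\mathcal{T}^\alpha$ is not literally a tensor product but incorporates an irrational rotation algebra (as reflected in the calculation of $K_0(\mathcal{S}^{\alpha,\beta})$ in Remark~\ref{rem1}); the naive K\"unneth reduction is then unavailable. Park's analysis of $K_*(\mathcal{T}^\alpha)$ still determines $\ker(\delta^\alpha)$ and the conclusion persists, but the bookkeeping becomes more delicate. A cleaner alternative is a fiberwise argument: for each fixed $t$, the two-dimensional bulk-edge correspondence applied to the gapped $H^\alpha(t)$ forces the $(\xi, \eta)$-Chern number of $H(t)$ to vanish, killing one weak invariant; the remaining two weak invariants are pinned down by the vanishing spectral flow of each of the gapped families $\{H^\alpha(t)\}_{t \in \T}$ and $\{H^\beta(t)\}_{t \in \T}$ (no eigenvalue of any family member crosses $\mu$), identified via diagram (\ref{sf}) with particular linear combinations of those Chern numbers that are linearly independent when $\alpha \neq \beta$.
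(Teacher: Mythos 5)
Your first route is genuinely different from the paper's: instead of the pullback algebra $\mathcal{S}^{\alpha,\beta}$ and Park's Mayer--Vietoris computation of $(\sigma^\alpha\circ p^\alpha)_*$, you use the two half-plane extensions separately, noting that the spectral gap condition lifts $h(H)$ to projections over $\mathcal{T}^\alpha\otimes C(\T)$ and $\mathcal{T}^\beta\otimes C(\T)$, so that $[h(H)]_0\in\ker(\delta^\alpha)\cap\ker(\delta^\beta)$, and you then compute these kernels. For rational (or infinite) slopes this is correct: after a $\mathrm{GL}_2(\Z)$ change of variables $\mathcal{T}^\alpha\cong C(\T)\otimes\mathcal{T}$ with $\ker\sigma^\alpha\cong C(\T)\otimes\K$, the boundary map is the one-variable Toeplitz index map in the transverse momentum tensored with the identity, $\ker(\delta^\alpha)$ is spanned by $[1]_0$ and the Bott class of the $(\xi_\parallel^\alpha,t)$-torus, and for $\alpha\neq\beta$ these Bott classes are independent in $H^2(\T^3;\Z)$, giving $\ker(\delta^\alpha)\cap\ker(\delta^\beta)=\Z[1]_0$ and hence the vanishing of the weak invariants.

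The genuine gap is the irrational case, which the proposition covers and which the paper's proof handles through Park's computation of $K_*(\mathcal{S}^{\alpha,\beta})$ and its explicit generators (Remark~\ref{rem1}). You flag this case but resolve it in neither of your two routes. In the first route, for irrational $\alpha$ the ideal $\ker\sigma^\alpha$ is not $\K\otimes C(\T)$, there is no tensor-product or K\"unneth reduction, and asserting that ``Park's analysis still determines $\ker(\delta^\alpha)$ and the conclusion persists'' is precisely the step that would have to be carried out. The ``cleaner alternative'' does not repair this: for irrational $\alpha$ the half-plane $\HH^\alpha$ is invariant under no nonzero lattice translation, so there is no edge Bloch momentum and no direct-integral decomposition of $H^\alpha(t)$ into half-line operators; both the fibrewise two-dimensional bulk-edge correspondence for the $\alpha$-edge in your step (i) and the spectral-flow identification in step (ii) presuppose exactly this rational structure. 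Step (ii) is also imprecise even for rational slopes: the spectral flow of the invertible family $\{H^\alpha(t)\}_{t\in\T}$ of half-plane operators vanishes for trivial reasons and is not itself a combination of weak Chern numbers; the correct statement fixes the edge momentum $\xi_\parallel$ and equates the spectral flow in $t$ of the half-line family $\{H^\alpha(\xi_\parallel,t)\}_{t\in\T}$ with the Chern number over the $(\xi_\perp^\alpha,t)$-torus. As written, your argument establishes the proposition only for rational or infinite $\alpha,\beta$; to reach the stated generality you must either work out the $K$-theory of the irrational half-plane extension or pass, as the paper does, through $K_0(\mathcal{S}^{\alpha,\beta}\otimes C(\T))$.
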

\begin{proof}
The algebra $\mathcal{S}^{\alpha, \beta}$ is the pullback of $\mathcal{T}^\alpha$ and $\mathcal{T}^\beta$ along $C(\T^2)$.
Applying the Mayer-Vietoris sequence in $K$-theory as in \cite{Pa90}, we can calculate the map $(\sigma^\alpha \circ p^\alpha)_* \colon K_i(\mathcal{S}^{\alpha,\beta}) \to K_i(C(\T^2))$ for $i = 0,1$.
$K$-groups $K_0(\mathcal{S}^{\alpha,\beta})$ and $K_0(C(\T^2))$ both have a $\Z$ direct summand generated by the identity.
Since $\sigma^\alpha \circ p^\alpha$ is the unital $*$-homomorphism, the induced homomorphism $(\sigma^\alpha \circ p^\alpha)_*$ maps this component isomorphically.
By calculating the Mayer-Vietorius sequence, we see that $(\sigma^\alpha \circ p^\alpha)_* \colon K_0(\mathcal{S}^{\alpha,\beta}) \to K_0(C(\T^2))$ maps the other components to zero (generators of $K_0(\mathcal{S}^{\alpha,\beta})$ are given in Remark \ref{rem1}). We also see that $(\sigma^\alpha \circ p^\alpha)_* \colon K_1(\mathcal{S}^{\alpha,\beta}) \to K_1(C(\T^2))$ is the zero map.

Using the isomorphism $K_0(\mathcal{S}^{\alpha,\beta} \bm{\otimes} C(\T)) \cong K_0(\mathcal{S}^{\alpha,\beta}) \oplus K_1(\mathcal{S}^{\alpha,\beta})$, we see that the map
$(\sigma^\alpha \circ p^\alpha \bm{\otimes} 1)_* \colon K_0(\mathcal{S}^{\alpha,\beta} \bm{\otimes} C(\T)) \to K_0(C(\T^3))$
maps isomorphically the $\Z$ direct summand generated by the identity, and that the other components map to zero.
The result then follows, since
$(\sigma^\alpha \circ p^\alpha \bm{\otimes} 1)_* ([p]_0) = [h(H(\xi,\eta,t))]_0$.
\end{proof}

\subsection{Gapless corner invariant}
\label{sec:3.2}
We next consider the following quarter-plane Toeplitz operators,
\begin{equation*}
	H^{\alpha,\beta}(t) := P^\alpha_V P^\beta_V H(t) P^\alpha_V P^\beta_V \colon \HH_V^{\alpha,\beta} \to \HH_V^{\alpha,\beta}.
\end{equation*}
We call $H^{\alpha,\beta}(t)$ {\em corner Hamiltonians}.

\begin{definition}\label{cornerinv}
By the spectral gap condition and Theorem~\ref{Fredholm}, we have a continuous family $\{ H^{\alpha,\beta}(t) \}_{t \in \T}$ of bounded linear self-adjoint Fredholm operators.
This family defines an element $\I_\Corner(H)$ of the $K$-group $K_1(C(\T)) \cong [\T, \mathcal{F}^{s.a.}_\ast]$.
We call $\I_\Corner(H)$ the {\em gapless corner invariant}.
\end{definition}

\begin{remark}\label{numcorner}
By diagram (\ref{sf}), the map $\mathrm{sf} \colon K_1(C(\T)) \to \Z$ maps the gapless corner invariant $\I_\Corner(H)$ to the spectral flow of the family $\{ H^{\alpha,\beta}(t) \}_{t \in \T}$.
Thus, if the gapless corner invariant is non-trivial, our corner Hamiltonian is gapless, and there exist topologically protected {\em corner states}.
Note that the invariant $\I_\BE(H)$ does not change as long as the spectral gaps of two edge Hamiltonians remain open.
This stability also holds for the gapless corner invariant.
\end{remark}

\subsection{Correspondence}
\label{sec:3.3}
The following is the main theorem of this paper.
\begin{theorem}\label{main}
The map $\delta_0 \colon K_0(\mathcal{S}^{\alpha,\beta} \bm{\otimes} C(\T)) \to K_1(C(\T))$ maps $\I_\BE(H)$ to the gapless corner invariant.
That is,
$\delta_0(\I_\BE(H)) = \I_\Corner(H)$.
\end{theorem}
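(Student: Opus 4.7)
The plan is to compute $\delta_0(\I_\BE(H))$ explicitly from the self-adjoint-lift formula for the index map, verify that the resulting unitary family is of the form ``identity plus compact,'' and then identify its $K_1$-class with $\I_\Corner(H)$ via the diagram (\ref{sf}).

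First, I would use compactness of $\T$ together with the spectral gap condition to pick $\epsilon>0$ such that $\mathrm{sp}(H^\alpha(t))\cup\mathrm{sp}(H^\beta(t))$ is disjoint from $[-\epsilon,\epsilon]$ for all $t\in\T$, and choose a continuous function $\tilde{h}\colon\R\to[0,1]$ equal to $1$ on $(-\infty,-\epsilon]$ and $0$ on $[\epsilon,\infty)$. Then $\hat{p}:=\tilde{h}(H^{\alpha,\beta}(t))\in M_N(\mathcal{T}^{\alpha,\beta}\bm{\otimes} C(\T))$, defined by continuous functional calculus, is a self-adjoint lift of $p=h(H^\alpha(t),H^\beta(t))$: the $*$-homomorphisms $\gamma^\alpha,\gamma^\beta$ intertwine functional calculus, and $\tilde{h}$ coincides with $h$ on the spectra of the edge Hamiltonians by the choice of $\epsilon$. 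The formula for $\delta_0$ via self-adjoint lifts therefore yields
\begin{equation*}
	\delta_0(\I_\BE(H))=[\exp(-2\pi i\,\tilde{h}(H^{\alpha,\beta}(t)))]_1\in K_1(\K\bm{\otimes} C(\T)).
\end{equation*}

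Write $\phi(x):=\exp(-2\pi i\tilde{h}(x))$, so that $\phi\equiv 1$ off $[-\epsilon,\epsilon]$. By Theorem~\ref{Fredholm} each $H^{\alpha,\beta}(t)$ is Fredholm, so its essential spectrum avoids $[-\epsilon,\epsilon]$; consequently $\phi(H^{\alpha,\beta}(t))-1$ is compact for every $t$, and $\phi(H^{\alpha,\beta}(t))$ is a unitary of the form ``identity plus compact'' in the unitization of $M_N(\K\bm{\otimes} C(\T))$. To match this with $\I_\Corner(H)$, choose a continuous $g\colon\R\to[-1,1]$ equal to $-1$ on $(-\infty,-\epsilon]$, equal to $1$ on $[\epsilon,\infty)$, and strictly increasing on $[-\epsilon,\epsilon]$. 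Then $g(H^{\alpha,\beta}(t))\in\hat{F}^{\infty}_\ast$ for every $t$, and the straight-line homotopy $s\mapsto((1-s)\mathrm{id}+sg)(H^{\alpha,\beta}(t))$ remains in $\mathcal{F}^{s.a.}_\ast$ throughout because $(1-s)x+sg(x)$ has the same sign as $x$ on $(-\infty,-\epsilon]\cup[\epsilon,\infty)$, so the spectral gap at $0$ is preserved. Hence the family $\{g(H^{\alpha,\beta}(t))\}_{t\in\T}$ represents $i_\ast^{-1}(\I_\Corner(H))$ in $[\T,\hat{F}^{\infty}_\ast]$.

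Finally, the two scalar functions $\phi$ and $x\mapsto\exp(i\pi(g(x)+1))=j(g(x))$ are homotopic in $C(\R,\T)$ through maps equal to $1$ off $[-\epsilon,\epsilon]$, since each parametrizes a single counterclockwise loop of the circle as $x$ crosses that interval. Transferring this homotopy via continuous functional calculus yields a norm-continuous homotopy between $\phi(H^{\alpha,\beta}(t))$ and $j(g(H^{\alpha,\beta}(t)))$ through unitaries of the form ``identity plus compact,'' so their classes agree in $K_1(\K\bm{\otimes} C(\T))$. Combined with the commutativity of (\ref{sf}), this identifies $\delta_0(\I_\BE(H))$ with $j_\ast(i_\ast^{-1}(\I_\Corner(H)))=\I_\Corner(H)$, which is the desired equality. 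The main technical obstacle is precisely this final matching: reconciling the two presentations of $K_1(C(\T))$—the one arising from the index map via the exponential lift, and the one arising from the family of self-adjoint Fredholms via $i$ and $j$—at the level of $K$-classes rather than merely of spectral-flow integers, which is exactly what the homotopy from $\phi$ to $j\circ g$, lifted through functional calculus, accomplishes.
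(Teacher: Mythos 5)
Your proposal is correct and follows essentially the same route as the paper: compute $\delta_0(\I_\BE(H))$ from an explicit self-adjoint lift and identify the resulting ``identity plus compact'' unitary family with $\I_\Corner(H)$ through the homotopy equivalences $i$ and $j$ of diagram (\ref{sf}); the only deviation is that you lift by functional calculus, $\tilde{h}(H^{\alpha,\beta}(t))$, rather than by the paper's linear splitting $\rho$ applied to $(1-h)(H^\alpha(t),H^\beta(t))$, which if anything makes the final homotopy to $\{H^{\alpha,\beta}(t)\}_{t\in\T}$ more transparent. One small imprecision: Fredholmness of $H^{\alpha,\beta}(t)$ by itself only excludes $0$ from the essential spectrum, so to conclude that $\esssp(H^{\alpha,\beta}(t))$ avoids all of $[-\epsilon,\epsilon]$ (and meets both half-lines, so that $g(H^{\alpha,\beta}(t))\in\hat{F}^{\infty}_\ast$) you should either use $\esssp(H^{\alpha,\beta}(t))=\mathrm{sp}(H^\alpha(t))\cup\mathrm{sp}(H^\beta(t))$, which follows from the exact sequence of Theorem~\ref{exact} as in the paper, or apply Theorem~\ref{Fredholm} to $H^{\alpha,\beta}(t)-\lambda$ for every $\lambda\in[-\epsilon,\epsilon]$.
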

\begin{proof}
Let $q := (1-h)(H^\alpha(t), H^\beta(t))$.
Since $\delta_0([p]_0 + [q]_0) = \delta_0[1_N]_0 = 0$, we have $\delta_0(\I_\BE(H)) = - \delta_0[q]_0 = [\exp(2\pi i \hat{q})]_1$, where $\hat{q} := \rho(q)$ is a self-adjoint lift of $q$.
By the spectral gap condition and Theorem~\ref{exact}, we have $\esssp(\hat{q}(t)) = \mathrm{sp}(q(t)) = \{ 0, 1 \}$.
By considering a spectral deformation that collapses eigenvalues in some small neighborhoods of $0$ and $1$ to points $0$ and $1$, respectively, we can deform $\hat{q}(t)$ into an element $\tilde{q}(t)$ of $\hat{F}^\infty_\ast$.
Thus, we obtain $[\exp(2\pi i \hat{q})]_1 = [\exp(2\pi i \tilde{q})]_1$.
Let us consider the isomorphisms $K_1(C(\T)) = [\T, \U(\infty)] \cong [\T, \hat{F}^{\infty}_\ast] \cong [\T, \mathcal{F}^{s.a.}_\ast]$.
We then obtain the following relation.
\begin{equation*}
 \delta_0(\I_\BE(H)) = [\exp(2\pi i \hat{q})]_1 = [2\tilde{q} - 1] = [2\hat{q} - 1].
\end{equation*}
Since two loops $\{ 2\tilde{q}(t) - 1 \}_{t \in \T}$ and $\{ H^{\alpha,\beta}(t) \}_{t \in \T}$ are homotopic in $\mathcal{F}^{s.a.}_\ast$, we have $[2\tilde{q} - 1] = [\{ H^{\alpha,\beta}(t) \}_{t \in \T}] = \mathcal{I_\Corner}(H)$.
\end{proof}
\begin{remark}\label{general}
If the spectrum of $(H^\alpha(t), H^\beta(t))$ is contained in $\R_{>0}$ or $\R_{<0}$, then we can define a topological invariant $\I_\BE(H)$ in the same way.
The spectral flow of the family $\{ H^{\alpha,\beta}(t) \}_{t \in \T}$ is also defined.\footnote{In this case, the family $\{ H^{\alpha,\beta}(t)\}_{t \in \T}$ is not contained in $\mathcal{F}^{s.a.}_\ast$ and does not define an element of the $K$-group $K_1(C(\T))$.}
In this case, $\delta_0(\I_\BE(H)) = 0$ and $\mathrm{sf}(\{ H^{\alpha,\beta}(t)\}_{t \in \T}) = 0$ holds.
Thus, we still have ``bulk-edge and corner'' correspondence.
\end{remark}
\begin{remark}\label{rem2}
Since $\T$ is just a parameter space in our formulation, we can generalize the parameter space $\T$ to other spaces.
Let $X$ be a compact Hausdorff space, and consider a continuous family $\{ H(x) \colon \HH_V \rightarrow \HH_V \}_{x \in X}$ of bounded self-adjoint multiplication operators generated by continuous maps.
We then define two edge Hamiltonians and the corner Hamiltonian in the same way.
If we assume our spectral gap condition, we can define the topological invariant $\I_\BE(H)$ and the gapless corner invariant $\mathcal{I_\Corner}(H)$ in the same way as elements of $K_0(\mathcal{S}^{\alpha,\beta} \bm{\otimes} C(X))$ and $K_1(C(X))$, respectively.
The exponential map $\delta_0$ maps $\I_\BE(H)$ to the gapless corner invariant $\mathcal{I_\Corner}(H)$ as in Theorem \ref{main}.
In this case, we lack the understanding of the gapless corner invariant as a spectral flow, but we still have a relation between our invariants and corner states; that is, if the gapless corner invariant is non-trivial, then there exist topologically protected corner states.
In particular, if we take $X = \T^2$, then this argument gives a ``bulk-edge and corner'' correspondence for such $4$-D systems.
\end{remark}

\section{Product Formula}
\label{sec:4}
In this section, we present one method for constructing a Hamiltonian with a spectral gap on the bulk and two edges using periodic Hamiltonians of a $2$-D type A topological phase and a $1$-D type AIII topological phase.
We see that such construction leads to some kind of ``product formula'' of these two topological phases.
By using this construction, we give one explicit non-trivial example of bulk-edge gapped Hamiltonians of non-trivial topological invariants.
In this section, we consider the case of $\alpha = 0$, $\beta = \infty$ and $\mu = 0$.

Let $V_1$ and $V_2$ be finite rank Hermitian vector spaces.
We assume that $V_2$ has a $\Z_2$-grading given by a $\C$-linear map $\Pi \colon V_2 \to V_2$ with $\Pi^2 = 1$.
We extend $\Pi$ to a $\Z_2$-grading on the Hilbert space $l^2(\Z; V_2)$ by the point-wise operation. We also write $\Pi$ for this $\Z_2$-grading.
Let $H_1$ be a multiplication operator on $l^2(\Z^2; V_1)$ generated by a continuous map $\T^2 \to \End_\C(V_1)$.
We assume that $H_1$ is self-adjoint and invertible.
In the classification table \cite{SRFL08,Kit09}, $H_1$ is in $2$-D type A.
Its partial Fourier transform gives a continuous family of bounded linear self-adjoint operators $\{H_1(t) \}_{t \in \T}$ on the Hilbert space $l^2(\Z; V_1)$.
We write $\I^{2d, \mathrm{A}}(H_1) \in \Z$ for the topological invariant of this system\footnote{$\I^{2d, \mathrm{A}}(H_1)$ is the minus of the first Chern number of the Bloch bundle, called the TKNN number \cite{TKNN82,PS16}. The sign convention used here is described in great detail in \cite{Hayashi}.}, 
Let $H_2$ be a multiplication operator on $l^2(\Z; V_2)$ generated by a continuous map $\T \to \End_\C(V_2)$.
We assume that $H_2$ is self-adjoint and invertible, and that $H_2$ satisfies $\Pi H_2 \Pi^* = - H_2$ (chiral symmetry).
In the classification table \cite{SRFL08,Kit09}, $H_2$ is in $1$-D type AIII.
We write $\I^{1d, \AIII}(H_2) \in \Z$ for the topological invariant of this system \cite{PS16}.
By the bulk-edge correspondence \cite{PS16}, these two topological invariants are the same as the one for systems with edge, and we here take this edge picture.
Let $P_i$ be the orthogonal projection of $l^2(\Z, V_i)$ onto $l^2(\Z_{\geq 0}, V_i)$ $(i=1,2)$.
Then $\I^{2d, \mathrm{A}}(H_1)$ is given by the minus of the spectral flow of the family of the self-adjoint Fredholm operators $\{ P_1H_1(t)P_1 \}_{t \in \T}$,
and $\I^{1d, \AIII}(H_2)$ is given by the minus of the signature of $\Pi$ restricted onto $\Ker P_2H_2P_2$.

Using these operators, let us consider the following bounded linear self-adjoint operator $H$ on the Hilbert space $l^2(\Z^3; V_1 \bm{\otimes} V_2)$,
\begin{equation}\label{prodHam}
	H = H_1 \bm{\otimes} \Pi + 1 \bm{\otimes} H_2.
\vspace{-1mm}
\end{equation}
Its partial Fourier transform gives a family of bounded linear self-adjoint operators $\{ H(t) = H_1(t) \bm{\otimes} \Pi + 1 \bm{\otimes} H_2 \}_{t \in \T}$ on the Hilbert space $l^2(\Z^2; V_1 \bm{\otimes} V_2)$.
\begin{theorem}\label{product}
For such a Hamiltonian $H$, the following holds.
\begin{itemize}
	\item[(1)] For any $t \in \T$, the operators $H(t)$, $H^0(t)$ and $H^\infty(t)$ are invertible.
	\item[(2)] We have $\mathrm{sf}(\I_\Corner(H)) = \I^{2d, \mathrm{A}}(H_1) \cdot \I^{1d, \AIII}(H_2)$,
where the right hand side is the product of two integers.
\end{itemize}
\end{theorem}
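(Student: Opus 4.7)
The plan is to exploit the chiral symmetry $\{\Pi, H_2\} = 0$ throughout by means of a squaring trick that decouples the two tensor factors. Since $\Pi$ acts pointwise on $V_2$, it commutes with both spatial projections $P_1$ and $P_2$, and the chiral relation gives $\{H_1(t)\bm{\otimes}\Pi,\, 1\bm{\otimes} H_2\} = H_1(t)\bm{\otimes}\{\Pi, H_2\} = 0$. Squaring the bulk Hamiltonian therefore yields
\[
H(t)^2 = H_1(t)^2 \bm{\otimes} 1 + 1 \bm{\otimes} H_2^2,
\]
which is bounded below by a positive constant because $H_2$ is invertible, forcing $H(t)$ to be invertible. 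The same squaring applies after half-plane compression: writing $\tilde{H}_1(t) := P_1 H_1(t) P_1$ and $\tilde{H}_2 := P_2 H_2 P_2$, the chiral anticommutation is preserved (since $\Pi$ commutes with each $P_i$), yielding $H^0(t)^2 = H_1(t)^2 \bm{\otimes} 1 + 1 \bm{\otimes} \tilde{H}_2^2$ and $H^\infty(t)^2 = \tilde{H}_1(t)^2 \bm{\otimes} 1 + 1 \bm{\otimes} H_2^2$. Invertibility of the bulk $H_1$ forces the partial Fourier transform $H_1(t)$ to be invertible for every $t$, so the first identity has a uniform positive lower bound, and the second is handled by invertibility of $H_2$. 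This proves Part (1).

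For Part (2), the corner operator is $H^{0,\infty}(t) = \tilde{H}_1(t)\bm{\otimes}\Pi + 1\bm{\otimes}\tilde{H}_2$, and the same squaring gives $H^{0,\infty}(t)^2 = \tilde{H}_1(t)^2\bm{\otimes} 1 + 1\bm{\otimes}\tilde{H}_2^2$. By one-dimensional Toeplitz theory, $\tilde{H}_1(t)$ and $\tilde{H}_2$ are self-adjoint Fredholm, so $\ker\tilde{H}_2$ is finite-dimensional and $0$ is isolated in $\sigma(\tilde{H}_2)$. Because $\{\Pi,\tilde{H}_2\}=0$, the involution $\Pi$ preserves $\ker\tilde{H}_2$ and its orthogonal complement, inducing an orthogonal direct-sum decomposition $H^{0,\infty}(t) = A_0(t) \oplus A_1(t)$, where
\[
A_0(t) = \tilde{H}_1(t)\bm{\otimes}\Pi_0 \quad \text{on} \quad l^2(\Z_{\geq 0}; V_1)\bm{\otimes}\ker\tilde{H}_2
\]
with $\Pi_0 := \Pi|_{\ker\tilde{H}_2}$, and $A_1(t)$ acts on the complement. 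Applying the squaring identity to $A_1(t)$ and using the spectral gap of $\tilde{H}_2$ away from its kernel yields $A_1(t)^2 \geq c>0$ uniformly in $t$, so $A_1$ is uniformly invertible and contributes zero to the spectral flow. Hence $\mathrm{sf}(\I_\Corner(H)) = \mathrm{sf}(\{A_0(t)\}_{t\in\T})$.

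To compute this residual spectral flow, diagonalize the self-adjoint involution $\Pi_0$ on the finite-dimensional space $\ker\tilde{H}_2$, obtaining multiplicities $n_\pm = \dim\ker(\Pi_0 \mp 1)$; then $A_0(t)$ decomposes into $n_+$ copies of $\tilde{H}_1(t)$ and $n_-$ copies of $-\tilde{H}_1(t)$. Combining $\mathrm{sf}(-\tilde{H}_1) = -\mathrm{sf}(\tilde{H}_1)$ with the given identifications $\mathrm{sf}(\{\tilde{H}_1(t)\}_{t\in\T}) = -\I^{2d,\mathrm{A}}(H_1)$ and $n_+ - n_- = \mathrm{sig}(\Pi_0) = -\I^{1d,\AIII}(H_2)$ gives
\[
\mathrm{sf}(\I_\Corner(H)) = (n_+ - n_-)\,\mathrm{sf}(\{\tilde{H}_1(t)\}_{t\in\T}) = \I^{2d,\mathrm{A}}(H_1)\cdot\I^{1d,\AIII}(H_2),
\]
which is the desired product formula. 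I anticipate the main subtlety to be the direct-sum decomposition and the uniform lower bound on $A_1(t)^2$: both reduce to checking that the finite-rank spectral projection onto $\ker\tilde{H}_2$ commutes with $\Pi$ (immediate from chiral symmetry) and that the Fredholm property of $\tilde{H}_2$ yields a $t$-independent spectral gap (valid since $\tilde{H}_2$ itself does not depend on $t$).
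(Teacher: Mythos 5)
Your proposal is correct, and while Part (1) coincides with the paper's argument (the same squaring trick, using that the cross terms vanish because $\Pi$ survives compression and still anticommutes with the compressed $H_2$, and then positivity of $\tilde H_1(t)^2\bm{\otimes}1+1\bm{\otimes}\tilde H_2^2$; your labelling of which edge is $H^0$ and which is $H^\infty$ is transposed relative to the paper's formulas, but since you treat both edges this is immaterial), your Part (2) is a genuinely different route. The paper works locally in $t$: it first notes $\{t \mid 0\in\mathrm{sp}(\tilde H_1(t))\}=\{t \mid 0\in\mathrm{sp}(H^{0,\infty}(t))\}$ and $\Ker H^{0,\infty}(t)=\Ker \tilde H_1(t)\bm{\otimes}\Ker\tilde H_2$, then uses homotopy invariance to reduce to a family with finitely many crossing points and counts, crossing by crossing, the contribution $n_i(k-l)$ coming from the relation $H(t)\,\varphi_1\bm{\otimes}\varphi_2=z\,\varphi_1\bm{\otimes}\Pi\varphi_2$. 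You instead argue globally: since $\Pi$ commutes with $\tilde H_2^2$ it preserves $\Ker\tilde H_2$ and its orthogonal complement, so the corner family splits as $A_0(t)\oplus A_1(t)$ with $A_1(t)$ uniformly invertible (squaring plus the $t$-independent spectral gap of the self-adjoint Fredholm operator $\tilde H_2$ on $(\Ker\tilde H_2)^{\perp}$) and $A_0(t)$ unitarily equivalent to $n_+$ copies of $\tilde H_1(t)$ and $n_-$ copies of $-\tilde H_1(t)$; additivity of spectral flow under direct sums and $\mathrm{sf}(-\tilde H_1)=-\mathrm{sf}(\tilde H_1)$ then give $(n_+-n_-)\,\mathrm{sf}(\tilde H_1)$, and your sign conventions match the paper's, so the product formula follows. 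Your decomposition avoids the genericity step (deforming so that the crossing set is finite) and the eigenvalue bookkeeping, at the cost of invoking the standard direct-sum and vanishing properties of spectral flow; the paper's crossing analysis is more hands-on but has the side benefit of exhibiting the corner states explicitly as tensor products of the two edge states, which is used afterwards in the discussion of Example~\ref{example}. Both arguments rest on the same structural facts: the chiral anticommutation after compression and the $\Pi$-invariance of $\Ker\tilde H_2$.
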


\noindent
{\it Proof of Theorem~\ref{product} (1)}. \
Here, we show only the invertibility of $H^0(t)$. The others can be proved in the same way.
Note that,
\begin{equation*}
	H^0(t) = P_1H_1(t) P_1 \bm{\otimes} \Pi + P_1 \bm{\otimes} H_2,
\end{equation*}
and its square is given as follows.
\begin{equation*}
	(H^0(t))^2 = (P_1H_1(t) P_1)^2 \bm{\otimes} 1 + P_1 \bm{\otimes} (H_2)^2.
\end{equation*}
By our assumption, $(H_2)^2$ is invertible.
Since positive operators form a cone, $(H^0(t))^2$ is positive and invertible.
Thus $H^0(t)$ is invertible.

\

By (1) of Theorem \ref{product}, the gapless corner invariant $\I_\Corner(H)$ is defined.
Note that the spectral gap of bulk and two edge Hamiltonians does not close if the spectral gaps of two bulk Hamiltonians $H_1(t)$ and $H_2$ remain open.

\

\noindent
{\it Proof of Theorem \ref{product} (2)}. \ 
It is enough to show the following equality,
\vspace{-1mm}
\begin{equation*}
	\mathrm{sf}(\{ P_1H_1(t)P_1 \}_{t \in \T}) \cdot \mathrm{sign}(\Pi|_{\Ker P_2H_2P_2}) = \mathrm{sf}(\{ H^{0,\infty}(t) \}_{t \in \T}),
\end{equation*}
where $\mathrm{sign}(\Pi|_{\Ker P_2H_2P_2})$ is the signature of the operator $\Pi|_{\Ker P_2H_2P_2}$.
Note that
$H^{0,\infty}(t) = P_1H_1(t)P_1 \bm{\otimes} P_2 \Pi P_2 + P_1 \bm{\otimes} P_2H_2 P_2$.
If $P_2H_2P_2$ is invertible, we have $\I^{1d, \AIII}(H_2) = 0$.
Moreover, as in the proof of Theorem \ref{product} (1), it follows that $H^{0,\infty}(t)$ is invertible for any $t \in \T$.
Thus, we have $\I_\Corner(H) = 0$.

We next consider the case when the operator $P_2H_2P_2$ has a non-trivial kernel.
Let $k$ and $l$ be the number of $+1$ and $-1$ eigenvalues of the operator $\Pi|_{\Ker P_2H_2P_2}$, respectively.
Then $\mathrm{sign}(\Pi|_{\Ker P_2H_2P_2}) = k-l$.
As in the proof of Theorem \ref{product} (1), we have
\begin{equation}\label{eq}
	\{ t \in \T \ | \ 0 \in \mathrm{sp}(P_1H_1(t)P_1) \}
		=
	\{ t \in \T \ | \ 0 \in \mathrm{sp}(H^{0,\infty}(t)) \},
\end{equation}
and
\begin{equation}\label{eq2}
	\Ker H^{0,\infty}(t) = \Ker P_1H_1(t)P_1 \bm{\otimes} \Ker P_2H_2P_2.
\end{equation}
Since $\I^{2d, \mathrm{A}}(H_1)$ depends only on the homotopy class of $\{H_1(t)\}_{t \in \T}$ in $[\T,\mathcal{F}^{s.a.}_\ast]$, we can assume, if necessary, that set (\ref{eq}) is a finite set.
In the following, we assume this condition, and write $\{ t_1, \cdots, t_m \}$ for this set.
For each $i \in \{ 1, \cdots, m \}$, there is a crossing point of $\mathrm{sp}(P_1H_1(t)P_1)$ with $0 \in \R$ at $t_i$.
We write $n_i$ for the rank of $\Ker P_{\geq 0}H_1(t_i)P_{\geq 0}$.
If this is a positive crossing, then the contribution of this crossing point to the spectral flow $\mathrm{sf}(\{ P_1H_1(t)P_1 \}_{t \in \T})$ is $+n_i$.
In this case, by (\ref{eq}), there also is a crossing point of $\mathrm{sp}(H^{0,\infty}(t))$ with $0 \in \R$ at $t_i$.
Note that for $\varphi_1 \in \Ker(H_1(t) - z)$ and $\varphi_2 \in \Ker H_2$, we have
$H(t)\varphi_1 \bm{\otimes} \varphi_2 = z \varphi_1 \bm{\otimes} \Pi \varphi_2$.
Thus, by equation (\ref{eq2}), the contribution of this crossing point to the spectral flow $\mathrm{sf}(\{ H^{0,\infty}(t) \}_{t \in \T})$ is $n_i k - n_i l = n_i (k-l)$.
In the case of negative crossing, its contribution to $\mathrm{sf}(\{ P_1H_1(t)P_1 \}_{t \in \T})$ and $\mathrm{sf}(\{ H^{0,\infty}(t) \}_{t \in \T})$ is $-n_i$ and $-n_i (k-l)$, respectively.
By taking their sum with respect to $i$, the desired equality follows.

\

We here give an explicit example of a bulk-edge gapped Hamiltonian and calculate its gapless corner invariant.
\begin{example}\label{example}
Let $\sigma_1$, $\sigma_2$ and $\sigma_3$ be the following Pauli matrices:
\begin{equation*}
\sigma_1 =
\begin{pmatrix}
0 & 1\\
1 & 0
\end{pmatrix}, \
\sigma_2 =
\begin{pmatrix}
0 & -i\\
i & 0
\end{pmatrix}, \
\sigma_3 =
\begin{pmatrix}
1 & 0\\
0 & -1
\end{pmatrix}.
\end{equation*}
Let $H_1$ be the following bounded linear self-adjoint operator on $l^2(\Z^2) \bm{\otimes} \C^2$.
\begin{equation*}
	H_1 = \frac{1}{2i} \sum_{j=1}^2 (S_j - S_j^*) \bm{\otimes} \sigma_j + (-1 + \frac{1}{2} \sum_{j=1}^2(S_j + S_j^*)) \bm{\otimes} \sigma_3,
\end{equation*}
where $S_1 = M_{1,0}$ and $S_2 = M_{0,1}$ are translation operators.
This Hamiltonian is taken from Sect. 2.2.4 of \cite{PS16}.
$H_1$ is an example of a $2$-D type A topological insulator.
Its topological invariant is calculated in \cite{PS16} and is $\I^{2d, \mathrm{A}}(H_1) = -1$.
Let $H_2$ and $\Pi$ be the following bounded linear self-adjoint operators on the Hilbert space $l^2(\Z) \bm{\otimes} \C^2$:
\begin{equation*}
	H_2 = \frac{1}{2} S \bm{\otimes} (\sigma_1 + i \sigma_2) + \frac{1}{2} S^* \bm{\otimes} (\sigma_1 - i \sigma_2), \
	\Pi = 1 \bm{\otimes} \sigma_3,
\end{equation*}
where $S$ is the translation operator given by $(S \varphi)(n) = \varphi(n-1)$.
Note that the operator $H_2$ satisfies the chiral symmetry; that is, $\Pi H_2 \Pi^* = - H_2$.
$H_2$ is an example of a $1$-D type AIII topological insulator.
Its topological invariant is calculated in \cite{PS16} and is $\I^{1d, \AIII}(H_2) = -1$.
We consider a bounded linear self-adjoint operator $H = H_1 \bm{\otimes} \Pi + 1 \bm{\otimes} H_2$ on $l^2(\Z^3, \C^4)$.
Note that $V_1 = V_2 = \C^2$ in this case.
By Theorem \ref{product} (1), $H$ satisfies our spectral gap condition. 
By Theorem \ref{product} (2), we have $\mathrm{sf}(\I_\Corner(H)) = \I^{2d, \mathrm{A}}(H_1) \cdot \I^{1d, \AIII}(H_2) = (-1) \cdot (-1) =1$.
Note that, by the proof of Theorem \ref{product} (2), the corner states of $H^{0,\infty}(t)$ are given by the tensor product of two edge states of $P_1H_1(t)P_1$ and $P_2H_2P_2$, which are calculated in \cite{PS16}.
\end{example}

\begin{remark}
The topological invariant $\I_\BE(H)$ for the Hamiltonian $H$ of (\ref{prodHam}) is calculated as follows.
In this case $\alpha = 0$ and $\beta = \infty$, so that the $K$-group where this invariant lives is calculated as $K_0(\mathcal{S}^{0,\infty} \bm{\otimes} C(\T)) \cong \Z \oplus \Z$.
The map to the first $\Z$ direct summand is given by taking the rank of the Bloch bundle over the three-dimensional torus.
The map to the second $\Z$ direct summand is given by taking the spectral flow of the gapless corner invariant.

Let $k_1$ be the rank of the Bloch bundle associated to $H_1$, and let $M_2$ be the rank of complex vector space $V_2$, which is necessarily an even integer by the chiral symmetry.
For our Hamiltonian $H$, the integer corresponding to the first $\Z$ direct summand is given by the rank of the Bloch bundle associated to $H$, which is calculated by $k_1M_2$.
Thus, the topological invariant $\I_\BE(H)$ for the Hamiltonian of the form (\ref{prodHam}) is calculated as $\mathcal{I}_\BE(H) = (k_1M_2, \I^{2d, \mathrm{A}}(H_1) \cdot \I^{1d, \AIII}(H_2))$.
In the case of Example \ref{example}, we have $\mathcal{I}_\BE(H) = (2,1)$.
\end{remark}

\subsubsection*{Acknowledgements}
This work is part of a Ph.D. thesis, defended at the University of Tokyo in 2017.
The author would like to expresses his gratitude for the support and encouragement of his supervisor, Mikio Furuta.
This work was inspired by the author's collaborative research with Mikio Furuta, Motoko Kotani, Yosuke Kubota, Shinichiroh Matsuo and Koji Sato.
He would like to thank them for many stimulating conversations and much encouragements.
He also would like to thank Christopher Bourne, Ken-Ichiro Imura, Takeshi Nakanishi and Yukinori Yoshimura for their many discussions, and thank Emil Prodan for sharing the information regarding \cite{BBH17}. 
This work was supported by JSPS KAKENHI Grant Number JP17H06461.

\end{document}